\documentclass[a4paper, authoryear, 11pt]{article}

\usepackage[english]{babel}
\usepackage{pifont}
\usepackage{color}
\usepackage{babel}
\usepackage{caption}
\usepackage{tikz}
\usetikzlibrary{positioning}
\usepackage{fontenc}
\usepackage{multirow}
\usepackage{longtable}
\usepackage{graphicx}
\usepackage{tabularx}
\usepackage{natbib}
\usepackage[colorlinks=true,citecolor=red]{hyperref}
\usepackage{lineno}
\usepackage[usenames,dvipsnames]{pstricks}
\usepackage{amsmath,amsthm, amsfonts, amssymb, mathrsfs,hyperref}
\usepackage{epsfig}
\usepackage{amsthm}
\usepackage{multicol}
\usepackage{authblk}
\usepackage{float}
\usepackage{subfigure}
\numberwithin{equation}{section}
\newtheorem{thm}{Theorem}[section]

\newtheorem{remark}{Remark}

\title{Optimal harvesting under annuity and compound interest laws: economic-ecological trade-offs in a logistic growth model}

\author[1]{Anurag Sau\thanks{Corresponding author: anuragsau@gmail.com}}
\author[2]{Sujayan Gupta}
\author[3]{Uttam Ghosh}
\author[1]{Sabyasachi Bhattacharya \thanks{Corresponding author: sabyasachi@isical.ac.in}}
\affil[1]{Agricultural and Ecological research Unit, Indian Statistical Institute, Kolkata, India}
\affil[2]{Department of Mathematics, National Institute of Technology, Durgapur, India}
\affil[3]{Department of Applied Mathematics, University of Calcutta, Kolkata, India}

\begin{document}
\maketitle
\section*{Keywords:} Bioeconomic model, Pontryagin's maximum principle, Singular optimal control, Constant and varying discounting, Discount rate.

\section*{Abstract}
The relationship between investment policy associated with species growth profile is essential in seeking the most appropriate strategy for a policymaker. Balancing maximum profit with the sustainability of species remains a central issue in both ecological and economic contexts. This study presents a comparative analysis of two interest principles, annuity and compounding, within the framework of optimal control. Various investment policies are examined from the perspective of capital theory, incorporating concepts such as future value, accumulation function, and force of interest, each contributing to the formulation of an optimal control strategy. Our analysis is based on a one-dimensional logistic model incorporating linear harvesting. Key parameters include the species growth rate and interest rate, and optimality is evaluated with respect to these variables. Using Pontryagin’s Maximum Principle, we derive the optimal harvesting policies under both discounting laws and characterize the resulting steady state equilibria. The principal finding indicates that for species with low intrinsic growth rates and low annual interest rates, the annuity law of interest yields optimal outcomes. Conversely, for any annual interest rate, species exhibiting moderate or high growth rates maximize profit under the compound law of interest. The study also addresses maximum net revenue and optimal strategies for varying growth rates under each interest law.

\section{Introduction}
Optimal control theory has been applied broadly in ecological modeling to address the management of renewable resources, where the goal is to find an optimal solution where profits can be maximized while ensuring the sustainability of these species. Balancing between short-term economic returns from harvesting and long-term sustainability is a major challenge in the interface of ecology and resource economics. Over time, various articles employed rigorous mathematical and economic methods to justify the trade-off, evaluate effective management policies, and explain how economic assumptions—including the valuation of future monetary flows—effect ecological outcomes \citep{clark1974mathematical,clark1974optimal, clark1978renewable}. Our research delves into the crucial application of different laws of interest with varying species growth rates. This exploration is essential in bioeconomics and environmental management, as it provides insights into optimal harvesting strategies for species with varying growth rates. Some investors prefer to make a one-time capital investment, while others opt for a gradual investment approach. Our goal is to maximize profits from both of these strategies. Here, reinvestment is viewed as an economic effort applied based on strategic conditions and perspectives.\citep{haq2016human}. Economic growth is regarded as an increase in the output produced by a financial scheme over time. A capital asset is the present value of the expected net future revenue from the capital aspect theory \citep{wuttaphan2017human}. This capital aspect theory broadly applies to natural and biological resources, property, and buildings. Most capital assets are used to generate revenue for more than one year. Productive potential makes a common-property resource stock valuable. Miss allocation of resources over time leads to overexploitation of resources, and excessive use of a resource causes a loss of future benefits \citep{maunder2006interpreting}, \citep{clark1974mathematical}, \citep{haq2016human}.

The application of optimal control theory to renewable resources was initiated by C.W Clark \citep{clark1974mathematical}. In this pioneer foundation, a biological growth model, such as the logistic equation, was coupled with the economic aspect. Using Pontryagin's maximum principle, the author derived the necessary condition for optimal harvesting trajectories and explained how species growth rate, carrying capacity and discount rate influence steady-state biomass and harvest rate. In subsequent research, Clark addressed the problem as a joint bio-ecological system, where policymakers select harvesting efforts or rates to maximize profit while ensuring species sustainability \citep{clark2010mathematical}. These studies also indicate that the qualitative form of optimal policy is influenced by several biological phenomena, such as species growth rate, resilience, and economic parameters like discounting or accumulation rules applied for the future value \citep{clark2010mathematical, polasky2021discounting}. Other study, which includes deterministic analysis, considered constant pricing, continuous harvest rates and the solutions often led to either interior steady states harvesting or they can switch between maximum or zero harvesting depending on cost structure and discounting \citep{sethi2005fishery}. Reed \citep{reed1979optimal} explored the optimal control with harvesting in a stochastic population model, where he introduced the environmental stochasticity into the system. This study concludes that, for a broad class of Markovian growth models, a constant escapement policy (harvesting down to a fixed biomass threshold) is optimal \citep{dockner2004strategic}. This finding has been a pioneering invention of fisheries management as it is very simple and implementable, and is robust to a certain type of uncertainty. A couple of consequences studies have been developed based on this concept, where correlated environmental noise, price fluctuations, and risk-averse utility functions are taken into consideration \citep{sethi2005fishery, alvarez1998optimal}. These studies reveal that the optimal escapement level promotes more conservative harvesting. For the mathematical aspects, modern treatment of stochastic harvesting uses the Hamilton-Jacobi-Bellman equation and viscosity solutions \citep{alvarez1998optimal, lenhart2007optimal}. While theoretical studies are elegant, practical implementation often faces parameter uncertainty, data limitations and institutional constraints. Walters and Hilborn \citep{walters1978ecological} considered adaptive management approaches, where they considered the harvesting policy as an evolving feedback process updated through monitoring and learning. Also, Ludwig, Hilborn and Walters \citep{ludwig1993uncertainty} provided a review of historical management failures, where they established that overconfidence in precise optimal control solutions without accounting for uncertainty led to resource collapse.

Prediction of future value is essential for the development of a Bioeconomic model. The change in future value over time significantly influences computational methods \citep{clark1974mathematical}. Economic forces, such as supply and demand, reach a balance in the absence of external influences, resulting in a state known as the economic equilibrium point \citep{arrow1966economic, arrow1974general}. Economic equilibrium combines economic variables (usually price and quantity) toward which normal economic processes, such as supply and demand, drive the economy. Economic equilibrium can also be applied to other variables, such as interest rates or aggregate consumption spending. The equilibrium point represents a theoretical state of rest where all economic transactions that should occur, given the initial state of all relevant economic variables, have taken place \citep{andersen2013fisheries, arrow1966economic, arrow1974general}. The best element from a set of values is usually determined in mathematics and economics to obtain the desired solution. Selecting the optimal element from available alternatives under varying conditions is referred to as the mathematical optimization problem. One of the essential parts of the optimization problem is optimal control, where a control mechanism is obtained for a dynamical system over some time to optimize the objective function \cite{clark1978renewable}. In general, the extension of the calculus of variations, which is a mathematical optimization method for deriving control policies, is used in such phenomena. L.S. Pontryagin's maximum principle $(1962)$ is one of the important techniques used in optimal control, which is generally the extension of the classical variational techniques of Euler, Lagrange, Legendre, and Weierstrass \cite{boscain2005introduction, cardin2019pontryagin, mardanov2015pontryagin}. Although the maximum principle can play an important role, optimal theory serves as a unifying approach to simplify real-world problems. Optimal control has wide-ranging applications for finding mechanisms to control a system that meet specific optimality criteria. This type of problem necessitates a cost function that depends on both state and control variables \cite{la2015optimal, sargent2000optimal}.

In the usual compound law of interest, the interest is added to the present value to get the new principal value, where the interest rate remains unchanged throughout the total time and is not affected by other parameters \citep{hudson2000mathematical, lewin2019emergence}. In an annuity law of interest, the principal amount is invested at the beginning of each period, and the interest accumulates. The accumulation function is defined in terms of time $t$, expressing the ratio of the value at time $t$ (future value) and the initial investment (present value). It is used broadly in interest theory \cite{lewin2019emergence}. This function helps to compute a unit's growth after any period. In continuous compounding, the number of compounding periods reaches infinity. In those cases, the continuous compound interest is called the force of interest \citep{hardy1921notes, lukavc2025compound}.

This article provides guidance on selecting the appropriate interest law for optimal control problems based on species growth rate. When the compound interest law is applied, the discount rate remains constant, whereas the annuity law of interest results in a discount rate that varies over time. The article \citep{andersen2005production} demonstrated that the total revenue generated by an interest law is influenced by both the intrinsic growth rate of the species and the interest rate of the scheme. It describes how net revenue changes with different interest rates while keeping the intrinsic growth rate constant, as well as how net revenue varies with different intrinsic growth rates while holding interest rates fixed. We also want to discuss the preferred policy for the same invested capital over a specific time for balancing between the optimality and the species sustainability. This study develops a comparative conclusion between these two laws of interest, which suggests that it is more beneficial for small business owners compared to those with larger capital. Furthermore, we have analyzed the logistic model with linear harvesting, applying an economic perspective to identify scenarios that yield optimal profits.

The paper is organized as follows: Section \ref{Formation of interested scheme} discusses the development of future value, the force of interest, and the discount rate of interest for both compound interest and annuity law. It also provides a comparison of these laws. Section \ref{Application of different law of interest} examines the application of these interest laws within a species growth model and provides comparative studies aimed at maximizing profit. We provide a detailed discussion on bio-economic equilibrium and singular optimal control within this model. Finally, we summarize our findings in section \ref{conclusion}.

\section{Interest accumulation framework: Accumulation function, future value, and the force of interest} \label{Formation of interested scheme}

One of the essential aspects of financial theory is principal accumulation, primarily used to illustrate the relationship between time and money in any economic scheme. In the real world, selecting an appropriate interest law with an increasing accumulation function is crucial, as profit generation remains a primary objective. Various interest laws, including simple interest, compound interest, and annuities, can be applied; however, compound interest is often preferred due to its efficacy in maximizing returns over time \citep{lewin2019emergence}. Calculating future value will aid in selecting the most beneficial interest law for a specific situation. From a modeling perspective, it will also help us optimize the wage, rent, and other additional costs. Consideration of cash flows and their timing is essential when establishing the time value relationship \citep{clark1974optimal}. The accumulation function for any interest law can be constructed based on five key variables: present value, future value, time interval, initial payment amount, and interest rate. The computational methodology was provided by Hudson \citep{hudson2000mathematical}. The interest rate depends on different interest laws, the model's needs, and the initial value \citep{samuelson1937some}. 
 
From an economic perspective, the accumulation function is expressed in terms of time $t$, representing the ratio of the value at time $t$ (future value) and the initial investment (present value) \citep{vaaler2021mathematical}. This concept is widely used in interest theory to model the growth of investments or other quantities over time. Commonly, the accumulation function is generated by accumulating the variations of the value that occur at some rate over moments of its independent variable \citep{thompson2008concept}. The accumulation function describes the process through which capital, wealth, or resources change due to savings, investments, or reinvestment. We define capital accumulation as the growth of physical assets such as machinery, buildings, and technology, producing more goods and services, and leading to economic expansion. Wealth accumulation occurs through saving and investing, driven by the effects of compound interest and capital gains.

The accumulation function is denoted as $\alpha (t)$, and it takes various forms for the different schemes. However, it adheres to the general property that, for any time interval $n$  ($n\geq0$), $\alpha (0)=1$ \citep{clark1974mathematical}. In the case of continuous compounding, as the number of compounding periods $n$ reaches infinity, we refer to continuous compound interest as the force of interest $\delta$. The force of interest may be constant or vary with time, depending on the law of interest. For any continuously differentiable function $\alpha(t)$, the force of interest, often called the logarithmic or compounded return, is a function of time defined as follows 
\begin{center}
 $\delta (t)=\frac{\alpha'(t)}{\alpha(t)}$=$\frac{d}{dt} [\ln {\alpha(t)}]$,
\end{center}

where $\alpha'(t)$ represents the derivative of the accumulation function with respect to time $t$. 
In financial theory, the force of interest, $\delta (t)$, plays a significant role in describing how rapidly the value of an investment grows continuously. A constant value of $\delta(t)$ indicates a uniform continuous compounding rate. Conversely, if $\delta(t)$ changes over time, it represents more complex financial arrangements, such as those described by the annuity law of interest \citep{clark1974mathematical}.

\subsection{Compound interest: fundamental formulation}\label{General results for the compound scheme}
\begin{longtable}[h] {| p{.1\textwidth} | p{.30\textwidth} | p{.60 \textwidth} |}
\hline
 Parameter & Name & Description \\
 \hline
 $t$  & Time & Duration of the investment  \\
\hline
 $\alpha(t)$  & Accumulation function & Obtained by expressing the ratio of the value at time t (future value) and the initial investment (present value)  \\
\hline
$\alpha' (t)$  & Rate of change of accumulation function over time  & Derivative of the accumulation function with respect to time $t$ \\
\hline
$n$ & Intervals  & Number of intervals of payment or, time intervals \\
\hline
$i$  & Annual rate of interest  & The amount is invested at a interest rate $i$ \\
\hline

$\delta(t)$  & Force of interest & Defined as the nominal rate of interest payable continuously   \\
\hline
$P^*_0$ & Initial amount  & Initial amount invested at compound interest \\
\hline
$P^*_i$  & Investment amount  & Investment after $i$ number of  years  \\
\hline
 $P_0$ & Principal amount  & Amount invested in an annuity scheme for n years with a rate of interest $i$ at initial stage\\
\hline
$P_n$  & Value of the principal amount after n years  & Value of the invested amount  $P_0$  after $n$ years  \\
\hline
$PV$  & Present value  & The present value of a future payment at time t is   \\
\hline
 $P$ & Future payment  & Future payment based on $PV$  \\
\hline
$p$  & Selling price & The selling price of per unit harvested population   \\
\hline
$c$ & Cost per unit efforts  & The constant harvesting cost per unit efforts  \\
\hline
$\lambda$  &  Parameter of the model & Adjoint parameter of the model \\
\hline
$r$ & Intrinsic growth rate  & Intrinsic growth rate of the species\\
\hline
\caption{The list of parameters used in this article.}
\end{longtable}

The initial investment, commonly known as the principal, represents the minimum capital invested at the outset and serves as the basis for calculating interest, fees, and returns. The growth of this investment is analyzed compounding formulas to determine how the amount changes based on the invested capital, the duration of the investment, and the interest rate. Estimating the future value allows for the assessment of the worth of a current payment at future. The difference between the future value and the initial value indicates the profit or loss generated by a particular investment strategy. Essentially, future value takes a current sum of money and projects its worth at a specified time in the future, while present value determines what a future sum of money is worth today \citep{shackle1965scheme}.

The compound law of interest has significant applications in investment policies and resource management \citep{lewin2019emergence}. Let's assume that an initial payment $P^*_0$ is invested at compound interest, where $i$ represents the rate of interest. In this case, the future value increases exponentially, and after $n$ years, it will be \citep{clark1974mathematical}
\begin{center}
	Future value = $P^*_0(1+i)^n$.
\end{center}
\noindent Now, assume $\delta = \ln {(1+i)}$. By substituting this into the earlier expression, we can generalize the future value as:
\begin{center}

	Future value= $P^*_0 e^{\delta t}$,

\end{center}
for an arbitrary time point $t\geq 0$. Here, $\delta$ is often referred to as the annual rate of interest (compounded continuously/force of interest). The process of discounting, which refers to the valuation of future payments, is the reverse of compounding interest on present payments. Therefore, the present value of a payment $P^*_t$ at $t$ years is 
 \begin{center}
 	Present value= $P^*_t e^{-\delta t}$.
 \end{center}
 Here, $\delta$ represents the instantaneous rate of discount. The terms discount and interest rates are used interchangeably when discussing future payments being discounted or present payments being compounded, respectively. If $P^*_0$, $P^*_1$, $P^*_2$,....,$P^*_N$ is invested in $0, 1, 2,..,N$ years respectively, then the present value is expressed as: 
 
 \begin{center}
  PV= $\displaystyle \sum_{k=0}^{N}\frac{P^*_k}{(1+i)^k}$.	
 \end{center}
  Similarly, for a continuous time stream of revenues $P^*(t)$, $0\leq t \leq T$, given by 
 
  \begin{center}
  	  PV=  $\displaystyle \int_{0}^{T} P^*_t(t)e^{-\delta t}$,
  	 \end{center} 
where $T$ represents the time interval for generating revenue.
 
 \subsection{Annuity interest: time-dependent accumulation and a new paradigm of thought process} \label{Formulation of general results for annuity scheme} 

Let $P_0$ be the principal amount invested in the annuity law of interest at the initial time, and $i$ be the annual rate of interest. After one year the principal value will be $P_1$ =$P_0 + iP_0 = P_0(1+i)$.\\
Similarly, the principal amount after the second year will be
\begin{align*}
	&P_0(1+i)+P_1(1+i)\\  
	&=\frac{P_0}{i} (1+i)[(1+i)^2-1] 
\end{align*}
Succeeding if we calculate the principal after $n$ years, the mathematical expression can be obtained by mathematical induction.

\begin{thm}
    If $P_0$ amount is invested in an annuity scheme with a rate of interest $i$ for $n$ years, then the amount $(P_n)$ after $n$ years can be calculated using the simple formula $P_0(1+i)\frac{(1+i)^n-1}{i}$. 
\end{thm}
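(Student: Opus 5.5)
The plan is to model the scheme as an annuity-due and prove the formula by induction on $n$. The model is this: at the start of each year a fresh deposit of $P_0$ is made, and every amount already in the account grows by the factor $(1+i)$ over the year. Accordingly, the balance at the end of year $n$ satisfies the recursion
\begin{equation*}
P_n = (P_{n-1} + P_0)(1+i), \qquad P_0^{\text{(balance)}}=0 \text{ before any deposit}.
\end{equation*}
Here $P_{n-1}$ denotes the accumulated value at the end of year $n-1$, and the new deposit $P_0$ is added at the beginning of year $n$. With this reading, the case $n=1$ gives $P_1=P_0(1+i)$. The case $n=2$ gives $P_0(1+i)+P_0(1+i)^2=\frac{P_0}{i}(1+i)[(1+i)^2-1]$, which agrees with the second-year expression computed just before the statement. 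I would first state this recursion explicitly, since the displayed computation writes ``$P_1(1+i)$'' but actually means the reinvested $P_0$.

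Base case: for $n=1$, the formula gives $P_0(1+i)\frac{(1+i)-1}{i}=P_0(1+i)$, which matches.

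Inductive step: assume $P_{n-1}=P_0(1+i)\frac{(1+i)^{n-1}-1}{i}$. Then
\begin{equation*}
P_n=(1+i)\Bigl[P_0(1+i)\frac{(1+i)^{n-1}-1}{i}+P_0\Bigr]
=P_0(1+i)\,\frac{(1+i)^{n}-(1+i)+i}{i},
\end{equation*}
and the numerator simplifies to $(1+i)^n-1$, as required. As an independent check, I would unroll the recursion directly to $P_n=P_0\sum_{k=1}^{n}(1+i)^k$ and sum the geometric series with ratio $1+i\neq 1$, which is valid because $i>0$. This gives the same closed form.

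The algebra is routine. The only real obstacle is interpretive: fixing the deposit and timing convention precisely (deposits at the beginning of each period, so the result is an annuity-due) so that the recursion is unambiguous and consistent with the earlier two-year computation. Once that is settled, the induction goes through immediately.
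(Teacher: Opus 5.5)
Your proof is correct and is essentially the paper's proof. It uses the same recursion $P_{m+1}=(P_0+P_m)(1+i)$, the same base case $n=1$ and the same inductive algebra, and your geometric-series unrolling is a harmless extra check. One small aside: the two-year display $P_0(1+i)+P_1(1+i)$ needs no reinterpretation, since it is literally $(P_0+P_1)(1+i)$, i.e.\ your recursion at $n=2$.
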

\begin{proof}
 Consider, $n$ =1, then $P_1=P_0(1+i)$. So, the equality is true for $n=1$.
Let us assume the equality is true for $n=m$. Now,
\begin{align*}
P_{m+1} & =(P_0+P_m)+(P_0+P_m)i\\
& =P_0(1+i)+P_0 \frac{(1+i)((1+i)^m-1)}{i}(1+i)\\
&=P_0(1+i) \frac{(1+i)^{m+1}-1}{i}
\end{align*} 
\noindent By mathematical induction, if the statement is true for $n=m$, it will also be true for $n=m+1$. Hence, the equality is true for any natural number.
\end{proof}

\noindent The accumulation function and the force of interest for the annuity law of interest are derived as follows. 
Let $\alpha(t)$ denote the accumulation function, which is expressed as $\alpha(t)=(1+i)\frac{(1+i)^t-1}{i}$.
Therefore, for the annuity law of interest, we have,
\begin{align*}
	\delta(t)&=\frac{\alpha'(t)}{\alpha(t)}\\
	    & = \frac{(1+i)^t\ln{(1+i)}}{(1+i)^t-1}\\
        &= (1+r_1^{-t})\ln {r_1}~\text{[Let $(1+i)=r_1$, neglecting higher order terms as $(1+i)>1$, and $(1+i)^{-t}<1$)].} 
	\end{align*}
\noindent According to the annuity law of interest, the present value of a future investment is determined by inverting the force of interest. This calculation utilizes the accumulation function.

\begin{thm}
	Let the instantaneous discount rate, denoted as $\delta$, be a function of time such that $\delta = \delta(t)$. The present value of a future payment $P$  at time $T$ is then expressed as $PV = Pe^{-\int_{0}^{T} \delta(s) ds}$.
	 \end{thm}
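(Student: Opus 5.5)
The plan is to treat the present value as the inverse of accumulation, and to obtain the accumulation factor from the definition of the force of interest given above, namely $\delta(t)=\alpha'(t)/\alpha(t)=\frac{d}{dt}[\ln\alpha(t)]$. The first step is to make precise what ``present value'' means. A unit invested at time $0$ grows to $\alpha(T)$ at time $T$, so a payment $P$ at time $T$ is equivalent today to the amount $PV$ with $PV\,\alpha(T)=P$. Hence $PV=P/\alpha(T)$. It therefore suffices to show that $\alpha(T)=e^{\int_0^T\delta(s)\,ds}$.

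The second step is to solve the linear ordinary differential equation $\alpha'(t)=\delta(t)\alpha(t)$ with the normalization $\alpha(0)=1$. I will assume $\delta$ is continuous on $[0,T]$ and $\alpha$ is continuously differentiable, as in the definition of the force of interest. I would like to integrate $\frac{d}{ds}\ln\alpha(s)=\delta(s)$ from $0$ to $T$, which gives
\begin{equation*}
\ln\alpha(T)-\ln\alpha(0)=\int_0^T\delta(s)\,ds .
\end{equation*}
Since $\ln\alpha(0)=0$, this yields $\alpha(T)=\exp\bigl(\int_0^T\delta(s)\,ds\bigr)$. Substituting into $PV=P/\alpha(T)$ gives the stated formula $PV=Pe^{-\int_0^T\delta(s)\,ds}$.

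The only real obstacle is justifying the use of $\ln\alpha$, which requires $\alpha(t)>0$ on $[0,T]$. I would handle this without assuming positivity in advance. Set $g(t)=\alpha(t)e^{-\int_0^t\delta(s)\,ds}$. Differentiating gives $g'=(\alpha'-\delta\alpha)e^{-\int_0^t\delta}=0$, so $g\equiv g(0)=1$. This gives the formula for $\alpha$ directly and shows that $\alpha$ is positive, so no separate positivity argument is needed.

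Finally, I would check the result against the two laws. For a constant $\delta=\ln(1+i)$ the formula recovers the compound present value $Pe^{-\delta T}$ from Section~\ref{General results for the compound scheme}. For the annuity force of interest, it is consistent with the exact accumulation function $\alpha(t)$ derived above.
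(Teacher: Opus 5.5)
Your proof is correct and is essentially the paper's argument: the paper integrates $\frac{dv}{v}=\delta(t)\,dt$ for the deposit value $v$ from $0$ to $T$ and then sets $v(0)=PV$, $v(T)=P$. You integrate the same logarithmic derivative for the normalized accumulation function (with $\alpha(0)=1$) and then divide, and your integrating-factor function $g$ makes the use of $\ln$ rigorous.

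One correction to your closing check. The paper's exact annuity function $\alpha(t)=(1+i)\frac{(1+i)^t-1}{i}$ has $\alpha(0)=0$, so its force of interest behaves like $1/t$ near $t=0$ and $\int_0^T\delta(s)\,ds$ diverges, which violates your hypotheses. The formula is only meaningful there with the paper's bounded approximation $(1+r_1^{-t})\ln r_1$, so drop or qualify that claim.
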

 \begin{proof}
 	Let $v(t)$ denote the value of the deposit at time point $t$ \citep{clark1974mathematical}. Then we have,
 \begin{eqnarray*}
 \frac{dv(s)}{dt}=\delta (t)v(s)\\
 \implies \frac{dv(s)}{v(s)}= \delta(t) dt
 \end{eqnarray*} 
 By integrating from $t=0$ to $t=T$, we get
 \begin{align*}
     \int_{0}^{T} \frac{dv(s)}{v(s)} &= \int_{0}^{T} \delta(t) dt\\
 \implies v(0) &= v(t)e^{- \displaystyle \int_{0}^{T} \delta(t) dt} 
 \end{align*}
 Now, $v(0)=PV$, $v(t)=P$, therefore $PV= Pe^{-  \displaystyle \int_{0}^{T} \delta(t) dt}$  	
 \end{proof}

\begin{remark}
It is observed that the discount (interest) rate varies with time under the annuity law of interest. Therefore, the preceding result can be applied to determine the present value according to this law. The present value for the annuity law of interest is given as follows:

\begin{align*}
& P_t\displaystyle \int_{0}^{\infty} e^{- \displaystyle \int_{0}^{t}\delta(s)ds} dt\\
& =P_t\displaystyle \int_{0}^{\infty} e^{- \displaystyle \int_{0}^{t} \ln{r_1} (1+r_1^{-s})ds} dt\\
& =P_t \displaystyle \int_{0}^{\infty} e^{-t\ln {r_1}+r_1^{-t}-1}dt 
\end{align*}
\end{remark}
Under the compound law of interest, the force of interest remains constant throughout the time period. In contrast, the annuity law of interest is time-dependent, exhibiting a decreasing function that stabilizes and becomes nearly constant after a specific interval.
The changes in the discount rate for both the compound and annuity law of interest are provided in the figures \ref{exp_delta_t_compound} and \ref{exp_delta_t_annuity}, generated for various interest rates. For the compound law, the discount rate approaches zero regardless of the interest rate. The annuity law similarly demonstrates exponential decay. As depicted in both Figures \ref{exp_delta_t_compound} and \ref{exp_delta_t_annuity}, increasing the interest rate causes \(e^{-\delta t}\) to approach zero. This outcome suggests that a higher interest rate requires a smaller present value to achieve maximum profit \citep{clark1974mathematical}, \citep{pascoe2002optimal}, \citep{barua2019maximum}.



\begin{figure}[H] 
    \centering
\includegraphics [height = 100mm, width =150mm]{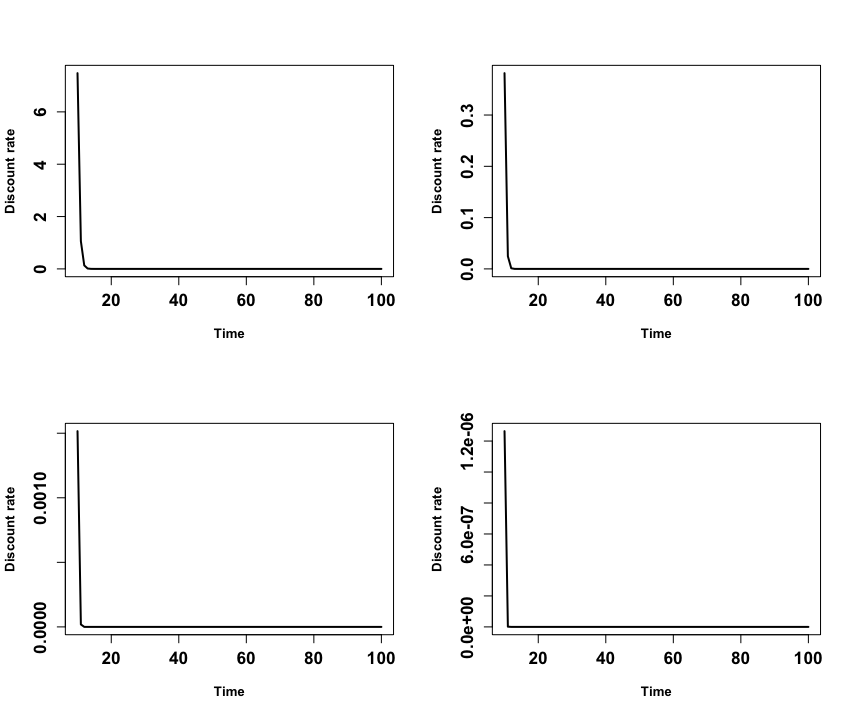}
    \caption{This figure illustrates the change of discount rate ($e^{- \delta t}$) for different interest rates ($i$) under the compound law of interest. The interest rates considered are $0.035, 0.055, 0.085$, and $0.115$. We scaled the y-axis by multiplying all values by $10^7$ units.}.
    \label{exp_delta_t_compound}
\end{figure}

\begin{figure}[H] 
    \centering
    \includegraphics[height = 100mm, width =150mm]{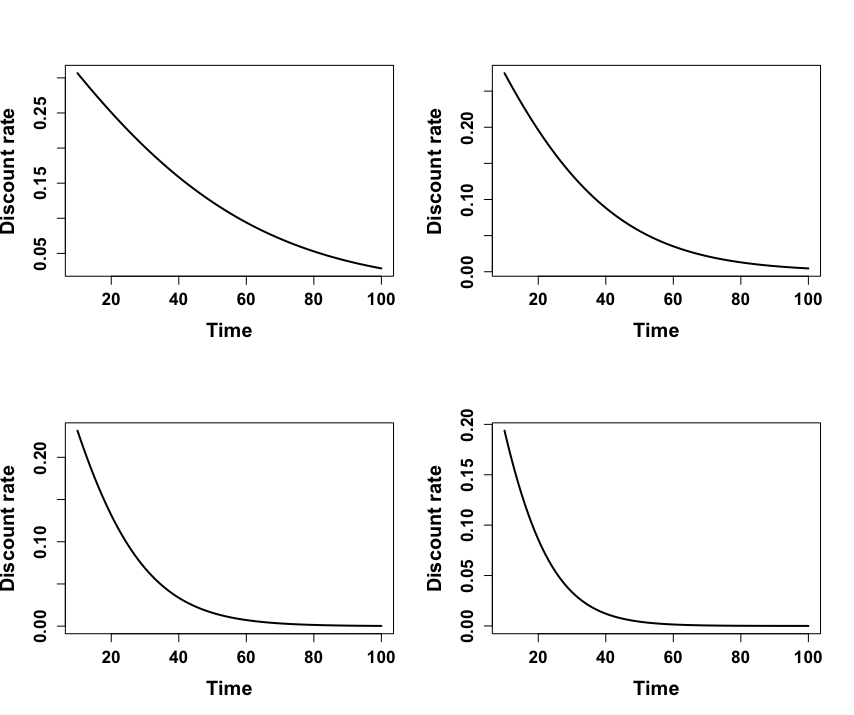}
    \caption{This figure illustrates the change of discount rate ($e^{- \delta t}$) for different interest rates ($i$) under annuity law of interest. The different interest rates are taken from the previous figure.}
     \label{exp_delta_t_annuity}
\end{figure}

%
%
%
%

\section{Optimal control of logistic growth under different law of interest} \label{Application of different law of interest}

This section focuses on developing an optimal harvesting strategy by comparing two laws of interest in the context of logistic growth with linear harvesting. The economical valuable  species are being harvested for commercial purposes. Implementing an optimal control policy enables effective balancing of the trade-off between species sustainability and profit maximization \citep{li2017bioeconomic}. The linear or proportional harvesting strategy is determined by the catch-per-unit effort hypothesis \citep{clark1974mathematical} and expressed as $H(x, E)= qEx$, where $q$ represents the catchability coefficient, and $E$ is the harvesting effort. Here, $H$ is an unbounded function of one variable of either $E$ or $x$, where the other variable is constant.

 We considered the general logistic model with linear harvesting, represented by
\begin{equation}\label{equation}
\frac{dx}{dt} = rx\left(1-\frac{x}{K}\right)-qEx ,
\end{equation}

where $r$ is the intrinsic growth rate, $q$ is the capability coefficient, $E$ is the harvesting effort, $K$ is the carrying capacity \citep{li2017bioeconomic}. 

Implementing an optimal harvesting strategy within a logistic growth model with linear harvesting is fundamental to effective bioeconomic management. The trade-off between maximizing profit and ensuring species sustainability remains a central consideration. Notably, it is crucial to select our strategies based on the intrinsic growth rate $(r)$, as the growth rate plays an important role in determining the profit. For species with a low intrinsic growth rate, the population reaches the carrying capacity gradually and tends to remain stable there \citep{foley2012review}. However, for a moderate or high intrinsic rate, the species reaches the carrying capacity faster and remains that population size. The rate of interest is also a vital parameter for maximizing the profit function. Therefore, conclusions are drawn based on these two parameters \citep{rempala1992decision}.  

\subsection{Optimal harvesting}
We aim to provide the optimal policy for a trade-off between the current and future harvest to utilise renewable resources properly. We consider the profit gained by harvesting. In our analysis, we focus on linear harvesting to derive an analytical expression and provide a proper interpretation of the results.

\subsubsection{Profit structure and bionomic equilibria}
Bioeconomic equilibrium is generally achieved when the total revenue generated by selling the harvested population equals the total cost invested in harvesting. The economic interest in the yield of the harvested effort defines the net economic revenue. The net economic revenue is given as follows:

Net economic revenue (N.E.R) is defined as total revenue (T.R) - total cost (T.C), where T.R = ($pqxE$) and T.C = $cE$ . Here, $c$ is the constant harvesting cost per unit effort and $p$ is the selling price per unit harvested population. Therefore, the net profit at any time is given by $(pqx-c)E$  \citep{srinivas2014optimal}. To satisfy the positivity condition, we have ($pqx-c > 0$), which implies that $ x > \frac{c}{pq}$ \citep{flaam1996approaches}.

\subsubsection{Singular optimal control characterization}
In this approach, our primary objective is to maximize the present value of the continuous-time revenue stream. This objective can be expressed mathematically as follows \citep{clark1974mathematical}:
\begin{equation}\label{optimal_function}
J(x,E)= \int_{0}^{\infty}e^{\delta t}P(x,E) dt  
\end{equation}
Here, $\delta$ is defined earlier, and harvesting agencies determine its value based on their preferences. We maximize equation \ref{optimal_function} together with the steady-state equation \ref{equation}. To address this maximization problem, Pontryagin's maximum principle is used \citep{gupta2012bifurcation}. In this context, $E$ represents the control variable, and it satisfies the constraint $0\leq E \leq E_{max}$, where $E_{max}$ is a feasible upper limit for the harvesting effort. Consequently, over an infinite time horizon, the optimal control problem is formulated as
\begin{center}
	\[
	\max_{0\leq E(t) \leq E_{max}} \int_{0}^{\infty}e^{\delta t}P(x,E) dt
	\]
\end{center}   
with $x(0)=x_0$.
The Hamiltonian function associated with this model can be expressed as:
\begin{eqnarray*}
	H(x,E,\lambda)=e^{-\delta t}[p(x,E)]+
	\lambda\frac{dx}{dt}\\
	=e^{-\delta t}\left[p(x,E) \right]+\lambda\left[rx\left (1-\frac{x}{K}\right)-qEx\right],
\end{eqnarray*}
where $\lambda$ is the adjoint variable \citep{raymond2019modeling}. 

Differentiating the Hamiltonian $H$ with respect to $E$ we get\\
\begin{equation}\label{differentiating_E}
\frac{\partial H}{\partial E}= (pqx-c)e^{-\delta(t) t} - \lambda q x .
\end{equation}

The above control problem confesses a singular solution on the control set $[0, E_{max}]$ when $\frac{\partial H}{\partial E}=0 $, which gives 
\begin{equation} \label{shadow_price_equation}
\lambda e^{\delta(t) t}= p-\dfrac{c}{qx},
\end{equation}   
where $\lambda e^{\delta(t) t}$ is called the shadow price \citep{gupta2012bifurcation}.
To determine the path of a singular control problem, according to Pontryagin's Maximum Principle, the adjoint variable has to satisfy the adjoint equation given by 
\begin{equation}
\frac{d\lambda}{dt}= -\frac{\partial H}{\partial x}
\end{equation} \citep{gupta2012bifurcation}.

Now, we are searching for a steady state solution $x^*$ by considering $E$ as a constant. Therefore,  

\begin{equation} \label{steady_state}
\begin{split}
    &\frac{d \lambda}{dt}=-pqEe^{-\delta(t)t}-\lambda \left[r\left(1-\frac{2x^*}{K}\right)-qE\right]\\
	 \implies & \frac{d \lambda}{dt}+\lambda \left[r\left(1-\frac{2x^*}{K}\right)-qE\right]=-pqEe^{-\delta(t)t}\\
	 \implies & \frac{d \lambda}{dt}+\alpha\lambda = \beta e^{-\delta(t)t}
	\end{split}
\end{equation}
where, 	$\alpha(x)=\left[r\left(1-\frac{2x^*}{K}\right)-qE\right]$ and	$\beta(x)=-pqE $. 

Determining the steady state of equation \eqref{steady_state} is challenging because of the presence of $e^{-\delta (t)t}$. To address this issue, the transformation $\lambda(t)=\mu(t) e^{-\delta(t)t}$ is applied. Consequently, the equation can be reformulated as shown in \citep{raymond2019modeling},
 \begin{eqnarray} \label{transversality_equation}
\begin{split}
\frac{d \mu (t)}{dt}-\delta(t)\mu (t)+\alpha \mu (t)=\beta\\
	\implies \frac{d \mu (t)}{dt}+[\alpha - \delta(t)]\mu (t)=\beta\\ 
\end{split}
\end{eqnarray} 

To satisfy the transversality condition at $\infty$, $\lim_{ t \to \infty} \lambda (t)=0$, the shadow price $\mu (t) = \lambda e^{\delta(t)t}$ should remain constant over the line in singular equilibrium. Thus, the solution of the equation \ref{transversality_equation} satisfying the transversality condition is given by    
\begin{equation} \label{value_beeta}
	\mu (t)=\frac{\beta}{\alpha - \delta(t)}\\
\end{equation}
Then from the equation \ref{shadow_price_equation} and and \ref{value_beeta}, we get
\begin{eqnarray} \label{final}
	\begin{split}
	& \frac{\beta}{\alpha - \delta(t)}=p-\frac{c}{qx^{*}}\\ 
	\implies & \frac{-pqE}{r(1-\frac{2x^{*}}{K})-qE}=p-\frac{c}{qx^{*}}\\ 
	\implies & \frac{2pq}{K} {x^*}^{2}-\left(pq+\frac{2c}{K}\right)x^{*}+c\left(1-\frac{qE}{r}\right)=0\\ 
\end{split}
\end{eqnarray}	
	
According to Descartes' rule of signs, the equation may yield two positive roots or one positive and one negative root, depending on certain constraints. If $\left (1-\frac{qE}{r}\right)>0$, the equation will have two positive roots and if $\left(1-\frac{qE}{r}\right)<0$, it will have one positive root and one negative root \citep{berck1979open, bhattacharyya2017macroalgal}. By combining equations \ref{equation} and \ref{final}, we the singular point is given by $ \left(x^*, E^*\right)= \left(x^{*},\frac{r}{cq}(\frac{2x^{*}}{K}-1)(pqx^{*}-c)\right )$.

\subsection{Result and discussion}\label{Result}

Previous research on optimal control has primarily addressed general strategies based on the compound law of interest. Our study examines both the compound and annuity laws of interest (constant discounting and varying discounting) and provides a comparative study on the application of optimality. We aim to provide insights into which economic aspect is better in terms of profit. The logistic model with linear harvesting is employed to compare both laws of interest and provides tabular representations and figures to facilitate a comprehensive comparison study. The findings indicate that the revenue depends on the intrinsic growth rate ($r$) and the rate of interest ($i$) for the linear harvesting. Distinct  optimal strategies for different $r$ and $i$, demonstrating the practical implications of our findings for decision-making in economic modeling. Both the compound interest and annuity law of interest are relevant across different intrinsic growth rates, leading to various optimal strategies for different combinations of $r$ and $i$.

Figure \ref{total_revenue}(a) indicates that at the initial stage, the net revenue generated by the annuity law of interest is higher than that of the compound law of interest for varying intrinsic growth rates. However, beyond a certain threshold, the net revenue from the compound law surpasses that of the annuity law. This profit also depends on the rate of interest. When both the growth rate and interest rate are low, the annuity law of interest (continuous discounting) yields the maximum profit. Conversely, at higher interest rates, the compound interest yields greater profit \ref{total_revenue}(b). For species exhibiting significant growth rates, the compound law of interest consistently provides the highest profit, as illustrated in Figure \ref{total_revenue}(c).

Under the annuity law of interest, the net revenue is achieved over a longer time frame when the intrinsic growth rate is low. As the value of $r$ increases, the net revenue rises; however, beyond a certain threshold, the net revenue begins to decline (as illustrated in figure \ref{net_revenue_different_r} (b)). Although an increase in $r$ leads to stabilization of net revenue at the maximum profit level, a very small intrinsic growth rate ($r=0.1$) can result in a high maximum profit that drops sharply. The net revenue is high when $r=0.1$, but as $r$ increases, it is attained at a more moderate level. After reaching its peak, profit tends to decline and eventually stabilizes over time. Initially, the compound law of interest tends to yield higher profit, but as time progresses, the profit decreases. As shown in Figure \ref{net_revenue_different_r}(a), after a certain period, net revenue exhibits stability for an extended time. For a higher growth rate, stabilization occurs earlier. In both cases, when $r$ is small ( below $0.3$), the annuity law of interest generates maximum profit, but when $r$ exceeds $0.3$, the compound interest yields greater profit. This figure illustrates that, if $r=0.7$, a net revenue of $1093$ units is reached after $29$ time periods. In contrast, this same revenue is achieved through the annuity law after only $13$ time periods. On the other hand, when $r$ is small, i.e., $r=0.35$, the net revenue of $1110$ units is generated at $29$ and $30$ time periods for the compound and annuity laws of interest, respectively. Thus, in this case, both methods yield the same profit after a comparable duration. This interpretation is more apparent from the table \ref{table_different_r} (a) and (b), which represent the compound and annuity laws of interest, respectively, with an interest rate $i$ set at $0.04$. In both scenarios, an increase in the growth rate results in a decline in the net revenue. For a comparatively low growth rate ( in our case $r=0.1, 0.3$), the annuity law of interest gives more profit, and for the moderate value of $r$ ($r=0.5,0.7$), compound interest to be more advantageous.

Figure \ref{net_revenue_different_i_small_r}(a) and (b) shows the net revenue for various interest rates ($i$) when the intrinsic growth rate ($r$) is relatively low (here $r$ = 0.3) for compound and annuity law of interest, respectively. At lower interest rates ($i=0.2,0.4$), the annuity law of interest yields higher net revenue. However, as the interest rate increases, compound interest provides more profit than the annuity law of interest. This trend is detailed in Table \ref{table_diff_i_small_r}, which provides a detailed comparison for different values of $i$ with a constant intrinsic growth rate ($r=0.3$). As anticipated, net revenue increases with higher interest rates, but the specific profit depends on the interest rate values. Shifting focus to moderate growth rates, Figure \ref{net_revenue_different_i_modarate_r} represents the net revenue for various interest rates ($i$) with a moderate intrinsic growth rate ($r=0.7$). Likewise, Table \ref{table_diff_i_modarate_r} interprets results consistent with the figure, displaying data for a fixed moderate intrinsic growth rate ($r=0.7$) across different interest rates. Notably, when the intrinsic growth rate is relatively high, the compound law of interest consistently provides the maximum profit compared to the annuity law of interest.

Harvesting effort is also an essential constraint from an economic perspective. More harvesting effort cost reduces the profit or net revenue in the optimal control problem. When the interest rate is fixed, figure \ref{total_effort_diff_r} represents the total efforts for different intrinsic growth rates ($r$). For the compound law of interest, an increase in growth rate requires more effort to achieve the optimal profit. We need more effort initially, and then we must reduce the effort. For the annuity law of interest, we have to increase the effort over time. Later, we must put in the maximum effort to get the optimal profit. For a small growth rate, optimum effort should be applied earlier than it would for a species with a significant growth rate. 

\begin{figure}[H]
	\begin{center}
		\subfigure[]{\includegraphics[height=6 cm, width=8cm]{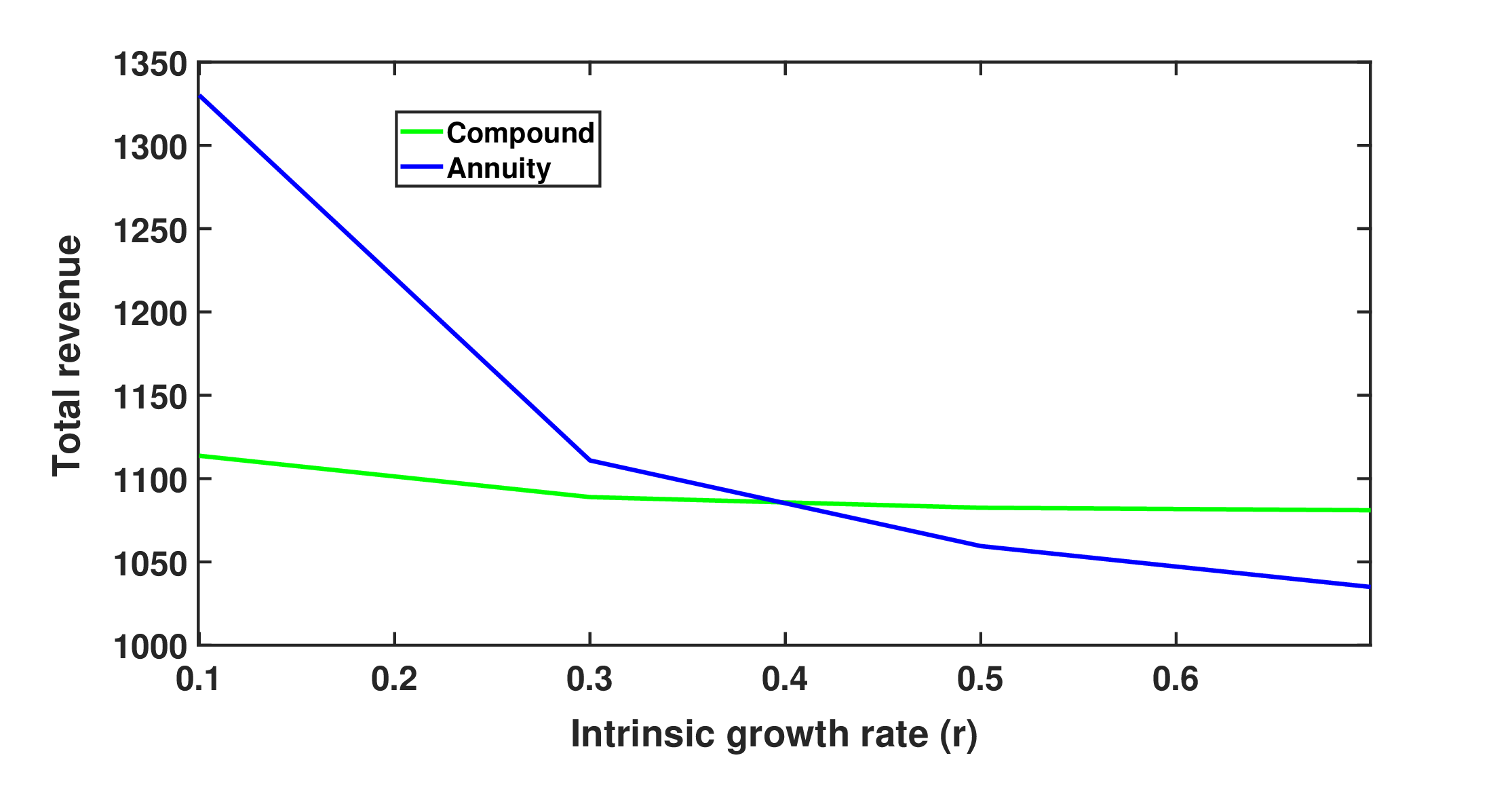}}
        \subfigure[]{\includegraphics[height=6 cm, width=8cm]{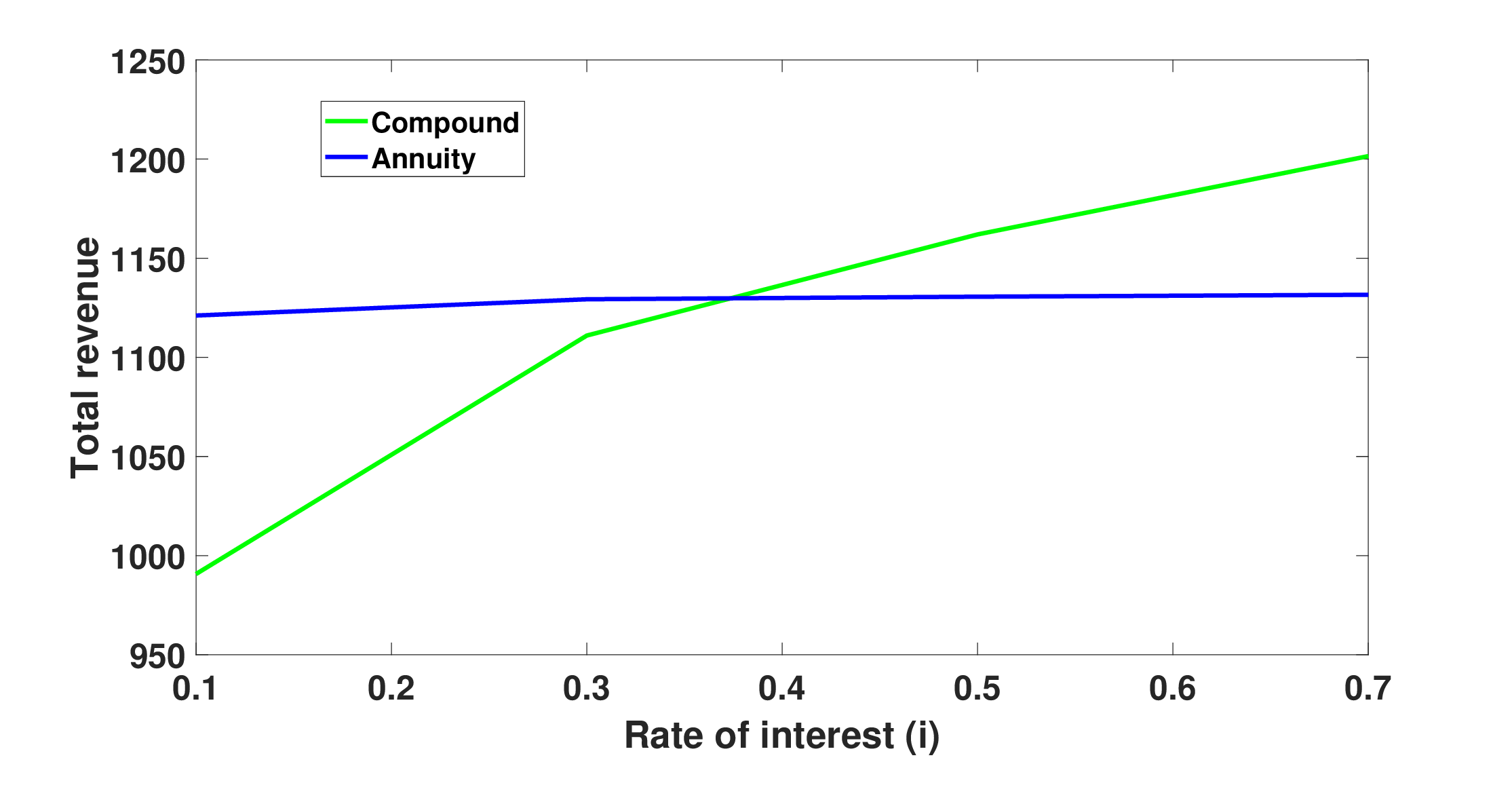}}
        \subfigure[]{\includegraphics[height=6 cm, width=8cm]{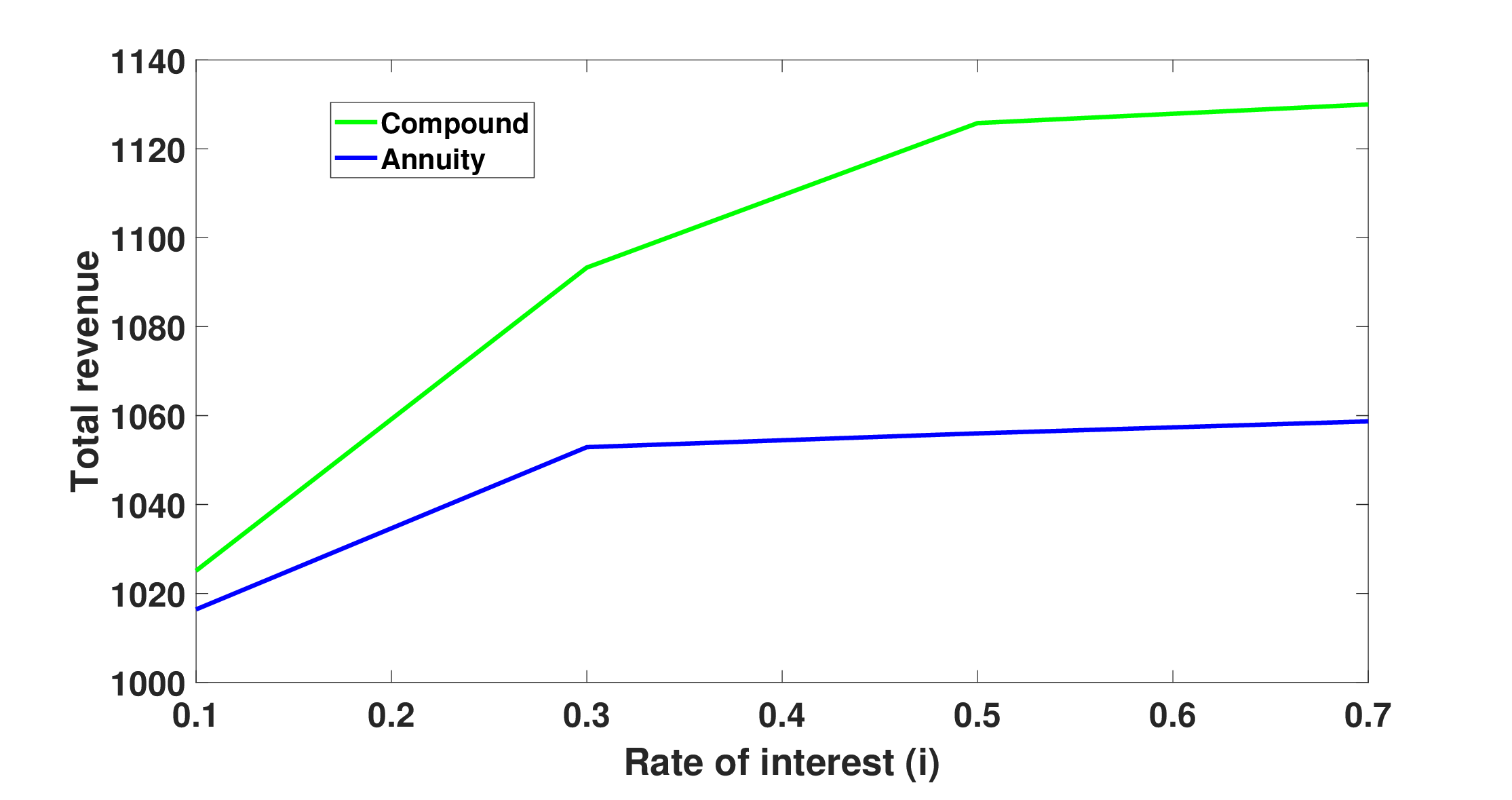}}
        
       \caption{These figures illustrate the total revenue generated under the compound and annuity laws of interest. Figure (a) demonstrates that a species with a low intrinsic growth rate achieves the optimal trade-off for the annuity law of interest. Increasing the growth rate reveals a threshold value, above which the compound law of interest yields the optimal profit. Figures (b) and (c) are plotted with respect to the rate of interest, considering growth rates below and above this threshold value, specifically $r=0.3$ and $0.7$).}
  \label{total_revenue}
	\end{center}

\end{figure}

\begin{figure} [H]
	\begin{center}
		\subfigure[]{\includegraphics[height=6 cm, width=8cm]{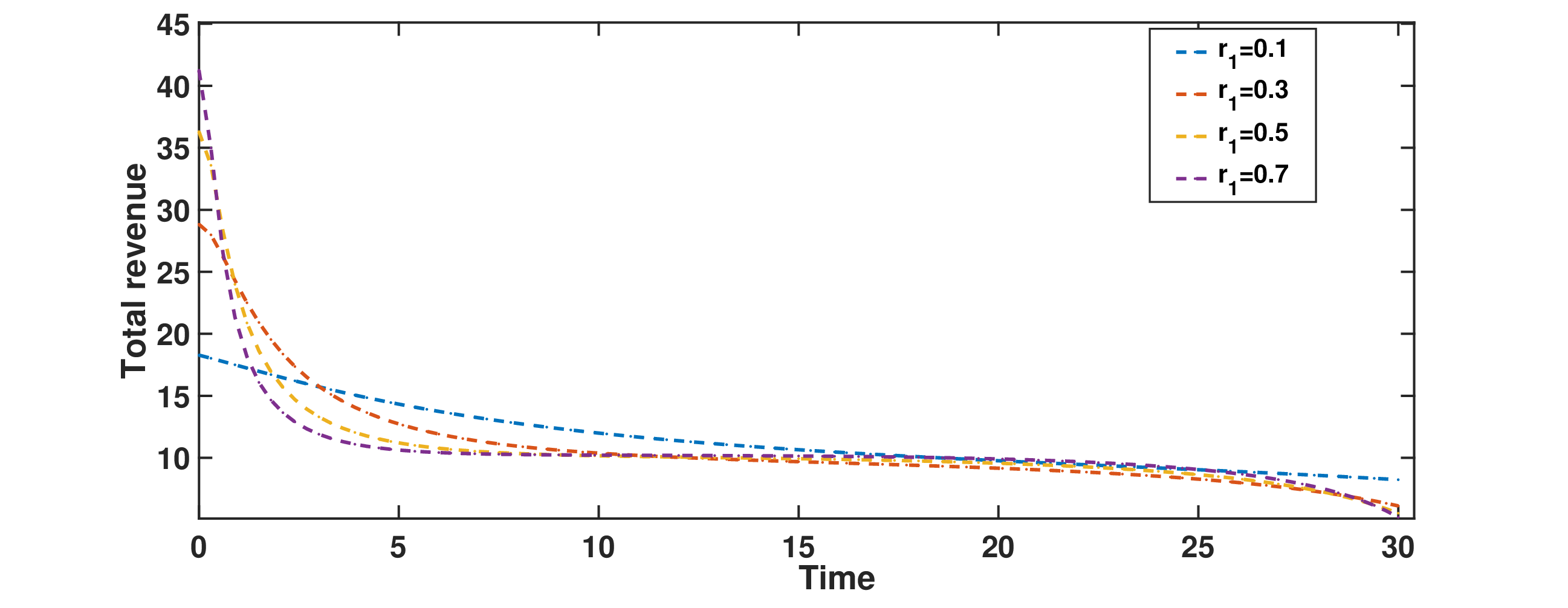}}
		\subfigure[]{\includegraphics[height=6 cm, width=8cm]{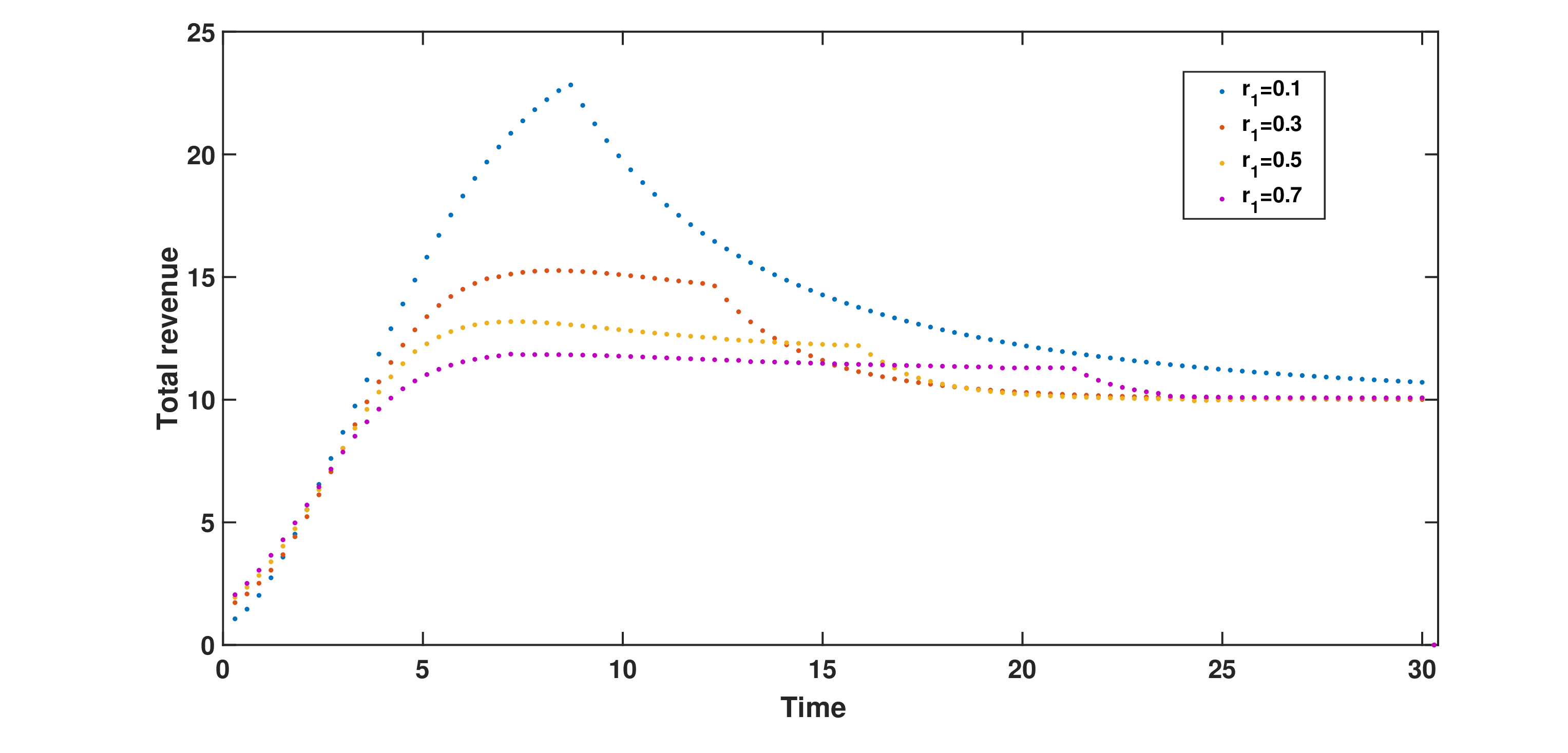}}
		\caption{These figures represent the total net revenue produced by varying intrinsic growth rates under compound and annuity interest laws, respectively.}
  \label{net_revenue_different_r}
		\end{center}
  
\end{figure}

\begin{figure}[H]
	\begin{center}
		\subfigure[]{\includegraphics[height=7 cm, width=8cm]{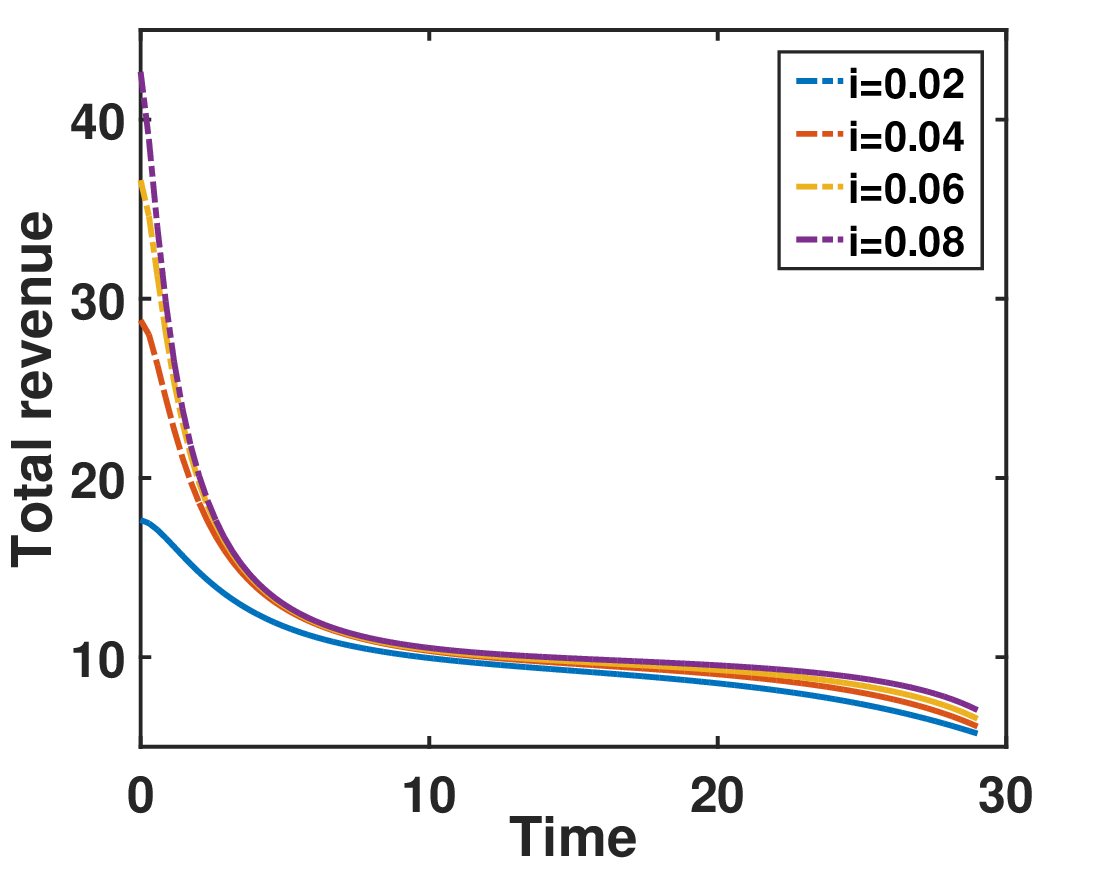}}
		\subfigure[]{\includegraphics[height=7 cm, width=8cm]{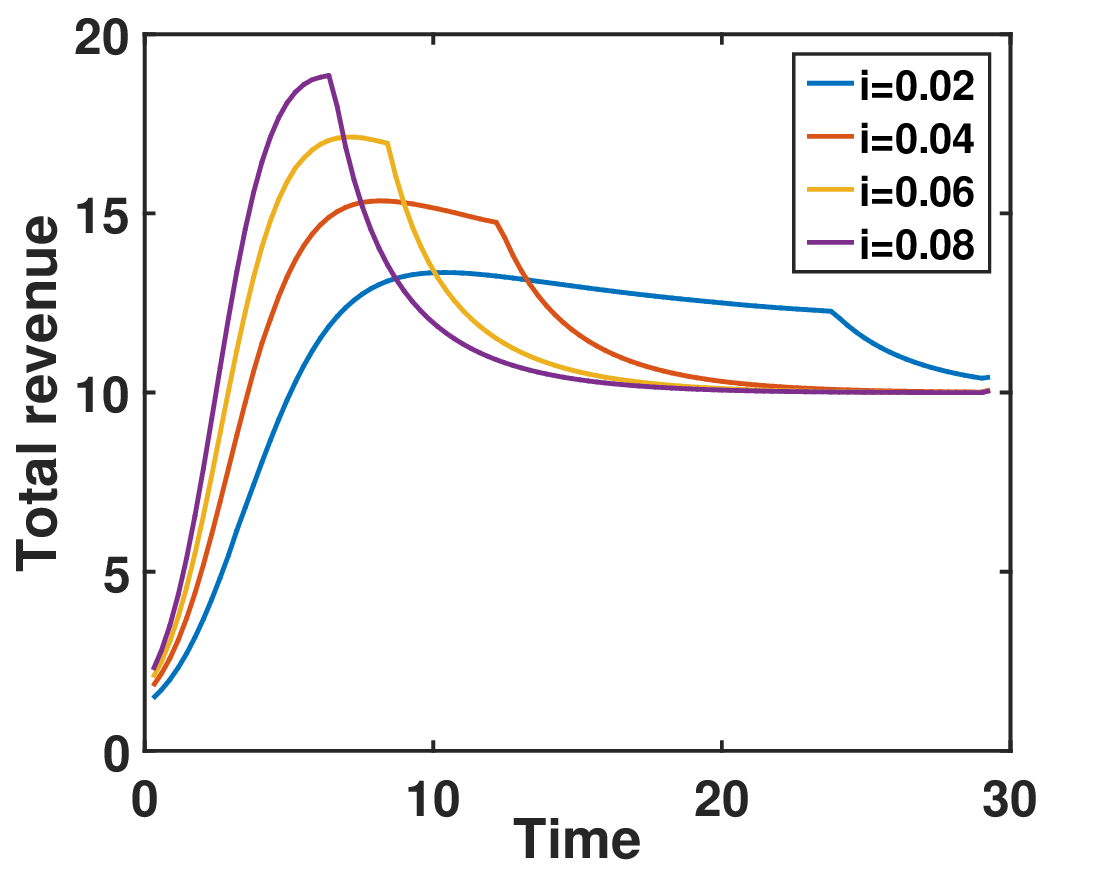}}
		\caption{These figures illustrate the net revenue produced by varying interest rates at a small intrinsic growth rate ($r=0.3$) under both the compound and annuity laws of interest.}
  \label{net_revenue_different_i_small_r}
		\end{center}
  
\end{figure}

\begin{figure}[H]
	\begin{center}
		\subfigure[]{\includegraphics[height=6 cm, width=8cm]{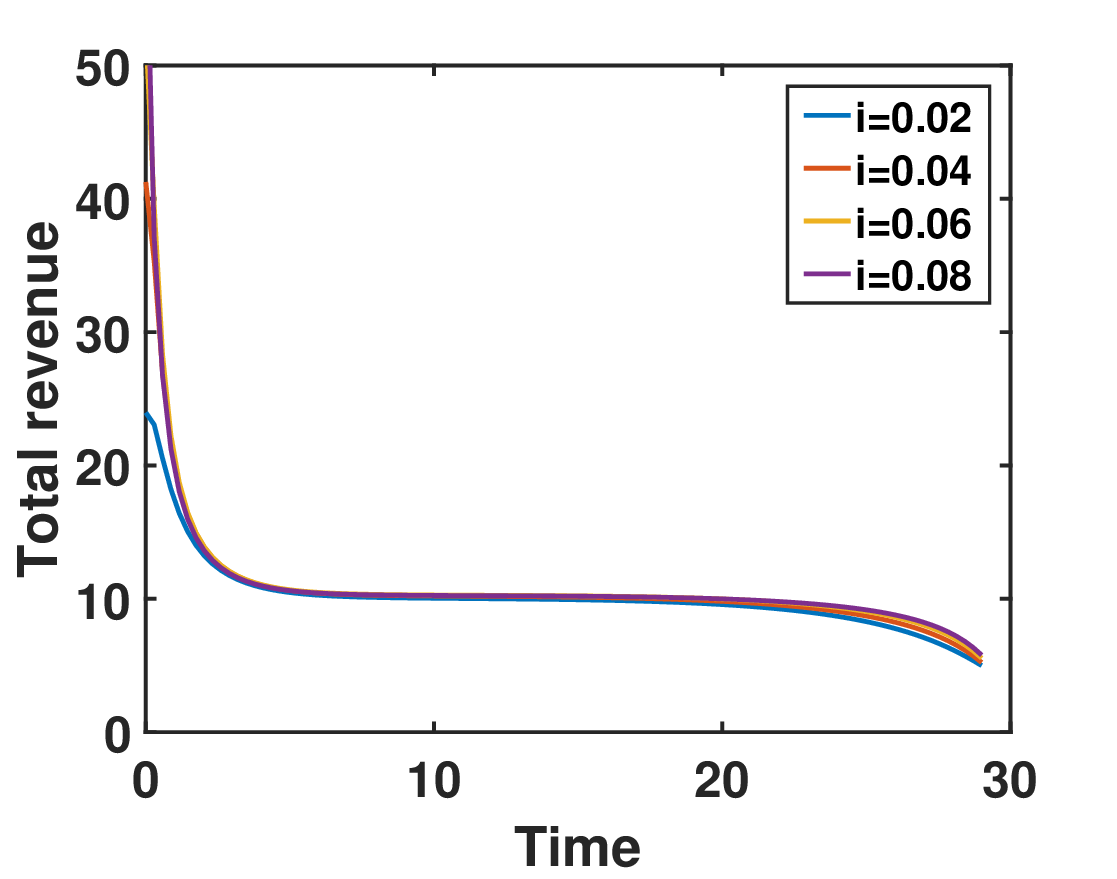}}
		\subfigure[]{\includegraphics[height=6cm, width=8cm]{total_revinue_different_i_fixed_r_annuity.eps}}\\
		\caption{This panel of figures represents the net revenue generated for different interest rates for moderate or large intrinsic growth rates ($r=0.7$) under the compound and annuity law of interest, respectively.}
  \label{net_revenue_different_i_modarate_r}
	\end{center}

\end{figure}

\begin{figure}[H]
	\begin{center}
		\subfigure[]{\includegraphics[height=7 cm, width=8cm]{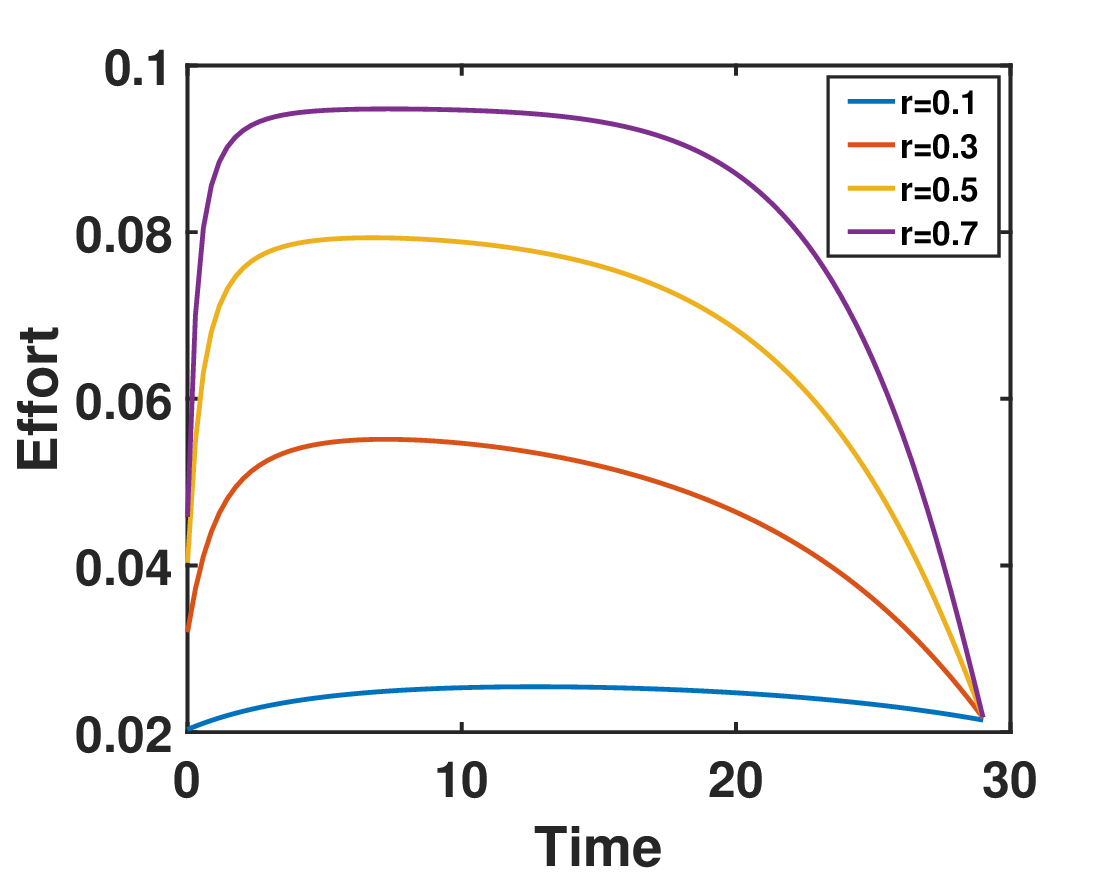}}
		\subfigure[]{\includegraphics[height=7 cm, width=8cm]{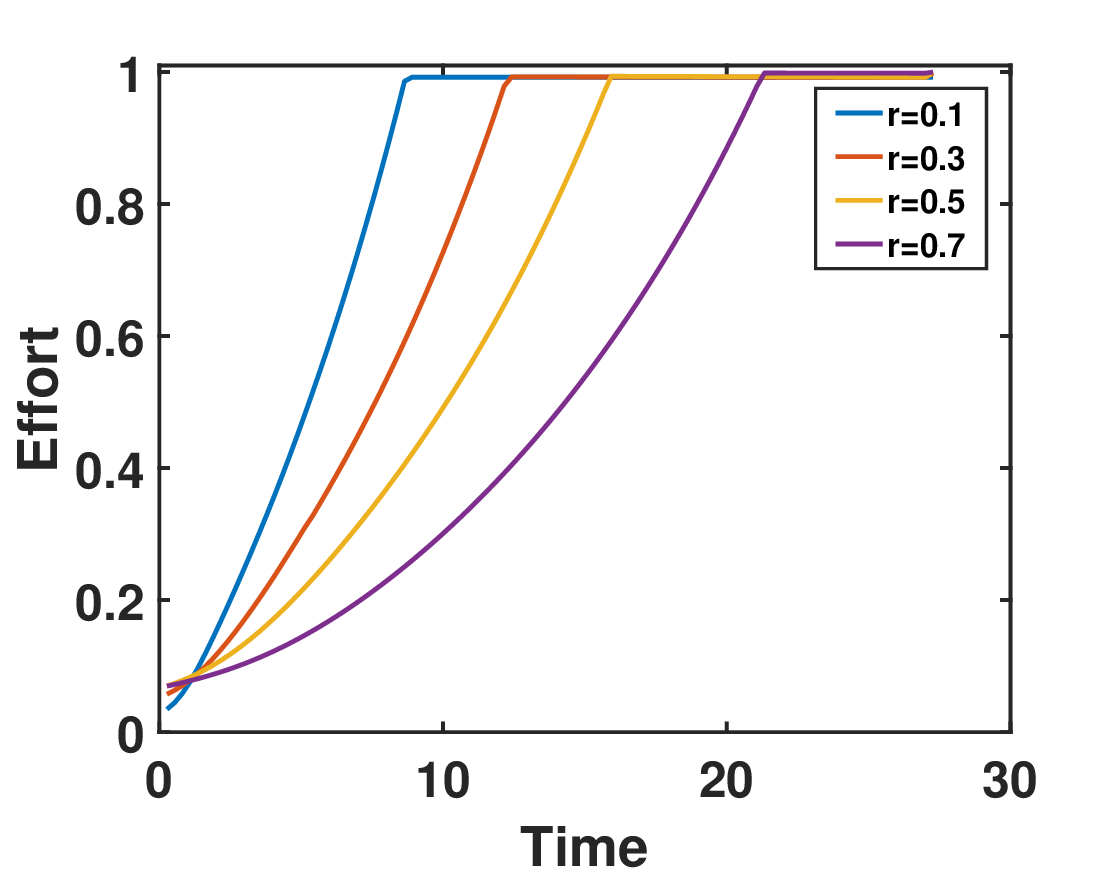}}
		\caption{These figures represent the total effort required for maximum profit under the compound and annuity law of interest, respectively, for different growth rates for the compound and annuity law of interest.}
  \label{total_effort_diff_r}
	\end{center}

\end{figure}

\begin{table}[H] 
	\scalebox{0.8}{
			\begin{tabular}{|l|l|} \hline 
				\textbf{Growth rate ($r$), (i=0.4) } &\textbf {Net revenue} \\ \hline \hline
				$r$=0.1 & 1113.7\\ 
				$r$=0.3 & 1088.9\\
			$r$=0.5 & 1082.5\\
				$r$=0.7 & 1081\\ \hline 
				
		\end{tabular}}
	\hfill
\scalebox{0.8}{
			\begin{tabular}{|l|l|} \hline 
			
				\textbf{Growth rate ($r$), (i=0.4)} & \textbf{Net revenue} \\ \hline \hline
				$r$=0.1 & 1330.1\\ 
				$r$=0.3 & 1110.9\\
			
				$r$=0.5 & 1059.5\\
				$r$=0.7 & 1034.9\\ \hline 
				
		\end{tabular}}

	\caption{Table of different values of the intrinsic growth rate when the rate of interest is fixed for the compound and annuity laws of interest.}
 \label{table_different_r}
\end{table}

\begin{table}[H] 
	\scalebox{0.8}{
	
\begin{tabular}{|l|l|}\hline

				\textbf{Rate of interest (i), ($r$=0.3) } & \textbf{Net revenue} \\ \hline \hline
				$$i=0.2$$ & 990.73\\ 
				$$i=0.4$$ & 1111\\
				$$i=0.6$$ & 1162\\
				$$i=0.8$$ & 1201.5\\\hline 
		\end{tabular}}
	\hfill
\scalebox{0.8}{	
	\begin{tabular}{|l|l|}\hline
			
				\textbf{Rate of interest (i), ($r$=0.3)} &\textbf {Net revenue} \\ \hline \hline
				$$i=0.2$$ & 1121.1\\ 
				$$i=0.4$$ & 1129.3\\
				$$i=0.6$$ & 1130.6\\
				$$i=0.8$$ & 1131.56\\ \hline 
		\end{tabular}}
	\caption{Table of different values of the rate of interest when the intrinsic growth rate is fixed (small) for the compound and annuity laws of interest.}
 \label{table_diff_i_small_r}
\end{table}

\begin{table}[H] 
	\scalebox{0.8}{

		\begin{tabular}{|l|l|}\hline
			
				\textbf{Rate of interest (i), ($r$=0.7) } & \textbf{Net revenue} \\ \hline \hline
				$$i=0.2$$ & 1025.1\\ 
				$$i=0.4$$ & 1093.3\\
				$$i=0.6$$ & 1125.8\\
				$$i=0.8$$ & 1130\\\hline 
		\end{tabular}}
\hfill
\scalebox{0.8}{

\begin{tabular}{|l|l|}\hline
			
				\textbf{Rate of interest (i), ($r$=0.7)} &\textbf {Net revenue} \\ \hline \hline
				$$i=0.2$$ & 1016.4\\ 
				$$i=0.4$$ & 1052.9\\
				$$i=0.6$$ & 1056\\
				$$i=0.8$$ & 1058.7\\\hline 
		\end{tabular}}
	
	\caption{Table of different values of the rate of interest when the intrinsic growth rate is fixed (moderate or large) for the compound and annuity law of interest.}
 \label{table_diff_i_modarate_r}
\end{table}

\begin{figure}[H]
\centering
\begin{tikzpicture}

\draw (0,0) rectangle (8,4);

\draw (.7,3.3) -- (8,3.3);
\draw (.7,2) -- (8,2);

\draw (4,0) -- (4,3.3);
\draw (.7,0) -- (.7,3.3);

\node at (4,3.7) {\bfseries Intrinsic growth rate};

\node[rotate=90] at (0.4,2) {\bfseries Interest Rate};

\node at (2,2.7) {Compound};
\node at (6,2.7) {Compound};

\node at (2,1.3) {Annuity};
\node at (6,1.3) {Compound};

\end{tikzpicture}
\caption*{This schematic diagram represents the visualization between the intrinsic growth rate and the interest law. For low interest rates and low intrinsic growth rates, the annuity law of interest provides the optimal solution; otherwise, the compound law of interest generates the net revenue.}
\end{figure}

\section{Conclusion} \label{conclusion}

This study discusses the trade-off between species growth rate and the suitable choice of optimal control policy. A comparative analysis of the compound and annuity laws of interest is presented, focusing on their relationship to species growth rate within the context of optimal control. The procedures for computing future value under both interest laws are outlined to facilitate comparison of accumulated amounts. The instantaneous discount rate for the compound law of interest remains constant, whereas it varies over time for the annuity law of interest. Our findings indicate different optimal strategies depending on varying intrinsic growth rates and interest rates. Under the annuity law of interest, a low intrinsic growth rate initially yields maximum profit. In contrast, for a compound interest, total revenue is highest at the beginning and decreases over time. For the annuity law of interest,  total revenue is stabilized when the intrinsic growth rate is increased,  but decreases beyond a certain threshold. Whereas, for the compound interest law, stabilization occurs over a longer duration, allowing for maximum profit to be sustained for an extended period. By analyzing the effort-revenue table, we concluded that fewer reproductive species give more revenue in the annuity law of interest at a low interest rate, while more reproductive species yield higher revenue under the compound law of interest. When comparing net revenue for varying intrinsic growth rates, we observed that the net revenue from the annuity interest law is greater than that from the compound interest law for a low rate of interest. However, after reaching a threshold intrinsic growth rate, the situation reverses, and higher revenue is obtained from the compound interest scheme beyond that point. 

This analysis focuses on identifying the appropriate scheme for specific biological and economic conditions. The compound interest model is recommended for species exhibiting high reproductive rates over short periods. When the intrinsic growth rate of a species is known, it can be evaluated against a predetermined threshold value. The compound law of interest is suitable if the growth rate exceeds this threshold. Conversely, for species with low intrinsic growth rates, the annuity law of interest can yield comparable benefits more rapidly, as it requires fewer time periods. Many species are harvested to meet human needs, resulting in significant economic implications.

In fisheries, stocks characterized by high reproductive rates, rapid growth, and short generation times typically exhibit a logistic growth profile. Small pelagic fishes exemplify species with high reproductive rates, whereas many deep-water fishes demonstrate lower growth rates \citep{mildenberger2025estimating, ruzicka2024role}. European anchovy (\textit{Engraulis encrasicolus}), a small pelagic fish with rapid growth and early maturity, is commonly found in European waters, Sardin (\textit{Sardina pilchardus}), a short-lived, fast-growing pelagic species, inhabits the Mediterranean Sea and the Northeast Atlantic, Jack mackerel (\textit{Trachurus spp.}) also represents a high-growth-rate species, with growth rates generally exceeding $0.6$ or $0.8$ \citep{schickele2021european}. In contrast, certain deep-water fish populations, such as blackbelly rosefish (\textit{Helicolenus dactylopterus}), Atlantic cod (\textit{Gadus morhua}), and other groundfish, experience long-term declines in productivity and are classified as low-productivity stocks, fitting a logistic model with relatively low intrinsic growth rates \citep{medeiros2025dwindling, mildenberger2025estimating}. For these species, the intrinsic growth rate is typically less than $0.3$. Thus, for small pelagic fish such as Sardin, Jack mackerel, etc, the compound law of interest yields optimal profit, whereas for Atlantic cod and other deep-water fish with low intrinsic growth rates, the annuity law of interest provides optimal profit under conditions of low interest rates.

Understanding the principles discussed in this paper is essential for making informed decisions when selecting between annuity and compound interest models under specific constraints and time frames. The comprehensive analysis presented serves as a valuable resource, guiding the selection of optimal approaches for one-dimensional models. The analogy between economic schemes and species growth, a recurring theme in this paper, provides a robust framework to enhance decision-making processes.

\bibliographystyle{plain}
\bibliography{document}

\end{document}